\newcommand{\abs}[1]{\left\vert{#1}\right\vert}
\newcommand{\vo}[1]{\overline{{#1}}}
\newcommand{\R}{\mathds R}
\title{
In-phase oscillations from the cooperation of cellular and network positive feedback in synaptically-coupled oscillators
}
\theoremstyle{definition}          
\newtheorem{lem}{Lemma}
\newtheorem{defn}{Definition}
\theoremstyle{definition}
\newtheorem{theo}{Theorem}
\author{Omar Juarez-Alvarez$^{1}$ and Alessio Franci$^{2}$% <-this % stops a space
\thanks{This work was supported by UNAM-DGAPA-PAPIIT grant IN102420 and by CONACyT grant A1-S-10610.}% <-this % stops a space
\thanks{$^{1}$O. P. Juarez-Alvarez is with the Department of Mathematics, Faculty of Sciences, UNAM.        {\tt\small pat\_ jualv@ciencias.unam.mx}}%
\thanks{$^{2}$A. Franci is with the Department of Mathematics, Faculty of Sciences, UNAM,
        {\tt\small afranci@ciencias.unam.mx}.
        }%
}
\begin{document}

\maketitle
\thispagestyle{empty}
\pagestyle{empty}

%%%%%%%%%%%%%%%%%%%%%%%%%%%%%%%%%%%%%%%%%%%%%%%%%%%%%%%%%%%%%%%%%%%%%%%%%%%%%%%%
%\begin{abstract}
%The suprachiasmatic nucleus (SCN) is a master clock for %the mammal organism, as it entrains the body to the daily light cues and it keeps circadian rhythmicity when subject to constant darkness (DD). The mammal SCN is comprised of approximately 20,000 neurons and is traditionally separated into two spacially distinct regions: the ventrolateral region, known as \emph{core}, and the dorsomedial region, also known as \emph{shell}. These two regions communicate to each other via neurotransmitter signaling, mainly involving GABA, VIP, AVP and, to a lesser degree, GRP. Such a signaling arrangement makes evident the existence of an underlying intricate network, which is only made further complex when considering the circadian nature seen in both the electrophysiological and neuromodulatory behaviour of the SCN. In our work, we propose a mathematical model built around a weighted multidigraph which captures the slow, neuromodulatory and the ultra-slow, molecular dynamics seen in the SCN, distinguishing many of its neuropeptidergic subpopulations. We then arrive to a theorical result which, while limited, guarantees global synchronicity as a consequence of these slow interactions. interactions
%\end{abstract}

\begin{abstract}
%Complex networks are capable of showing organized behavior, such as phase synchronization, even under simple evolution rules. In this work, we explore different network structures and propose sufficient conditions to sustaining rhythmic activity. This is done for systems of ordinary differential equations built over weighted multidigraphs. We are particularly interested on network conditions which guarantee not only phase synchrony, but also identical amplitude. 
We study the emergent dynamics of a network of synaptically coupled slow-fast oscillators. Synaptic coupling provides a network-level positive feedback mechanism that cooperates with cellular-level positive feedback to ignite in-phase network oscillations. Using analytical bifurcation analysis, we prove that the Perron-Frobenius eigenvector of the network adjacency matrix fully controls the oscillation pattern {locally in a neighborhood of a Hopf bifurcation}. Besides shifting the focus from the spectral properties of the network Laplacian matrix to the network adjacency matrix, we discuss other key differences between synaptic and diffusive coupling.

%We use analytical bifurcation analysis to show that synaptic-like  coupling provides network positive feedback that leads to in-phase network oscillations between neural-like slow-fast oscillators even when the uncoupled oscillators are damped and under the general conditions that the coupling digraph is strongly connected. For in-regular digraph, we prove exact synchronization. In this work we consider identical oscillators and purely excitatory coupling, but our approach naturally generalizes to tackle heterogeneities and inhibitory coupling. Mathematically, the key novelty is a shift of focus from the spectral properties of the graph Laplacian, typically studied when the oscillators are diffusively coupled, to the spectral properties of the graph adjacency matrix. More precisely, the Perron-Frobenius eigenvector of the adjacency matrix is shown to determine the in-phase oscillation subspace.
\end{abstract}

\section{Introduction}

Synchronization is usually studied in the context of diffusive coupling \cite{kim16, panteley17}, \emph{i.e.}, when the interaction between the oscillators is proportional to the difference in their states. Focusing on diffusive coupling has various limitations. First, because diffusive coupling is passive, oscillators must be intrinsic, that is, they must exhibit limit cycle oscillations in the absence of network interactions. Second, the only type of emergent network activity is a practically synchronous one, where, for large enough diffusive coupling strength, the oscillators converge to the same state modulo a synchronization error.  Motivated by understanding the emergence of sustained in-phase oscillations in the suprachiasmatic nucleus (SCN) in the mammal master circadian clock \cite{arechiga04, evans16}, we introduce a model of slow-fast oscillators with synaptic-like coupling and study the emergence of in-phase oscillations in it. A fundamental experimental observation, reproduced in our model but impossible to reproduce in diffusively coupĺed models, is that in the SCN many clock neurons behave as sustained oscillators only in the presence of network interactions, whereas they behave as damped oscillators when isolated \cite{webb09}.
SCN dynamics are therefore emergent, in the sense that the collective behavior (sustained oscillations) relies on network interactions and it is distinctively different from the isolated node behavior (damped oscillations) \cite{noguchi17}.\footnote{Note that the notion of emergent dynamics used in~\cite{kim16, panteley17} is different from ours.}

The intrinsic dynamics of our oscillators include a saturated fast cellular positive feedback loop and a linear slow negative feedback loop. It is a simplified version of excitable neural dynamics \cite{hh, fhn}. The interaction of the two loops leads to relaxation (slow-fast) neural-like oscillations for strong enough positive feedback through a Hopf bifurcation. The network synaptic-like couplings are approximated as saturated inputs to the receiving oscillator of the state of the sending oscillator. They provide network positive feedback, which cooperates with cellular positive feedback to ignite and shape emergent network oscillations.

The contributions of our analysis are the following. First, we prove a general lemma for the spectral properties of a class of block-defined matrices with the structure of the Jacobian matrix of our model. Second, we show that diffusive coupling cannot induce synchronous oscillations in a network of damped oscillators. Third, we prove that under a strongly directed network topology synaptic coupling can lead to in-phase oscillations even when the uncoupled oscillators are damped. If the coupling is in-regular, in-phase oscillations become synchronous, \emph{i.e.}, all oscillators converge to the same state. In this work, we rely on (local) bifurcation analysis at the model equilibrium and show that the (dominant) Perron-Frobenius eigenvector of the network adjacency matrix fully determines the in-phase oscillation pattern. Our results are in line with existing ones on automata synchronization \cite{gusev16, zhong17} and, together with \cite{lee20b}, they stress the importance of considering non-diffusive coupling in synchronization studies. In future works, we will couple our local results with a global analysis, using, for instance, dominance analysis~\cite{forni2018differential}. Also, we only consider here homogeneous (identical) intrinsic dynamics. In future works we will relax this assumption as well by exploiting the power of synaptic coupling of being naturally apt to cope with non-synchronous in-phase oscillations, as those that are expected in heterogeneous populations.

\section{Notation and definitions}

$\mathds{N}$ denotes the set of positive natural numbers, and $\mathds{R}$ the set of real numbers. In general, $N\in\mathds{N}$ will be a positive integer. As usual, $\mathrm{Re}(z)=x$ denotes the real part of a complex number $z=x+iy\in\mathds{C}$. $\mathds{R}^N$ denotes the set of real $N$-tuples, and $\vo{x}\in\R^N$ denotes an arbitrary $N$-tuple. Because of the specific models used, it will be convenient to denote $\mathds{R}^{2N}=\mathds{R}^N\times\mathds{R}^N$ and its elements as $(\vo{x},\vo{y})\in\mathds{R}^{2N}$. The zero  and one vectors $\vo{0}_N\in\mathds{R}^N$, $\vo{1}_N\in\mathds{N}$, denote tuples which have all their entries equal to zeroes and ones, respectively. Finally, a vector is said to be \emph{positive} if all its entries are strictly positive, denoted by $\vo{x}>0$.

A {\it sigmoid} is a bounded, continuously differentiable function $S:\mathds{R}\to\mathds{R}$ such that $S(0)=0$, $S'(x)>0$ for all $x\in\R$, $S'(0)=1$, and ${\rm argmax}_{x\in\R}S'(x)=0$.

\par The set $\mathscr{M}_{N\times N}$ contains all real $N\times N$ matrices represented as $M=(M_{ij})$. $I_N$ denotes the identity matrix in dimension $N$, and $O_N$ denotes the zero matrix in dimension $N$. The determinant of a matrix $A\in\mathscr{M}_{N\times N}$ is denoted by $\abs{A}$, and its characteristic polynomial is denoted by $p(\lambda)=\abs{A-\lambda I_N}$.

\begin{defn}
A matrix $M\in\mathscr M_{N\times N}$ is called \emph{non-negative} if $M_{ij}\geqslant 0$. A matrix $M\in\mathscr M_{N\times N}$ is called \emph{Metzler} if $M_{ij}\geqslant 0$ for all $j\neq i$. A matrix is said to be \emph{simple} if all of its diagonal entries are equal to zero.
\end{defn}

A \emph{weighted digraph} $\mathscr{G}=(V,A)$ is a 2-tuple consisting of a set of vertices or nodes $V=\{ 1,\ldots, N\}$ and an adjacency matrix $A\in\mathscr{M}_{N\times N}$ with the convention that there exists a directed edge from vertex $j$ to vertex $i$ if and only if $A_{ij}\neq0$, in which case $A_{ij}$ is the weight of the edge. We will always assume that $A_{ii}=0$, \emph{i.e.}, there are no self-loops in the digraph, so that every adjacency matrix considered is \emph{simple} as defined before. Given a node $i\in V$ of a weighted digraph $\mathscr{G}$, its \emph{weighted in-degree} is denoted by $\partial_i^-:=\sum_{j}A_{ij}$. The \emph{in-degree matrix} $D^-$ of a weighted digraph $\mathscr{G}$ is a diagonal matrix defined by $D_{ii}^-=\partial^-_i$. The \emph{in-degree Laplacian matrix} of a weighted digraph $\mathscr{G}=(V,A)$, denoted as $L^-$, is defined by  $L^-:=D^--A$. {Throughout this paper we don't require the graphs to be undirected, that is, we don't assume that $A$ is symmetric. Therefore, the eigenvalues $\mu_1,\ldots,\mu_N$ of the Laplacian matrix $L^-$ may be complex and they all satisfy ${\rm Re}(\mu_i)\geq \mu_1=0$.}

\begin{defn}
A weighted digraph $\mathscr{G}=(V,A)$ is \emph{in-regular} if every node $i\in V$ has the same in-degree $d^-$. Under such a condition, $d^-$ will denote the \emph{global in-degree} of the weighted digraph. $\mathscr{G}$ is \emph{strongly connected} if, for any two nodes $i$ and $j$, there exists a directed path which connects $i$ to $j$; in this case, its adjacency matrix is said to be \emph{irreducible}.
\end{defn}

\section{A network of slow-fast oscillators}

We present a single, general model that includes diffusive and excitatory synaptic coupling between slow-fast damped or sustained oscillators
\begin{equation}\label{eq:gral}
\begin{split}
\dot x_i=&-x_i-y_i+\!\sum_{j=1}^N A^d_{ij}(x_j\!-\!x_i)\!+\!S\left(\!\!\alpha_i x_i\!+\!\sum_{j=1}^N A^e_{ij}x_j\!\right),\\  
\dot y_i=&\varepsilon(x_i-y_i),    
\end{split}
\end{equation}
for every $i\in V=\{1,\ldots,N\}$, where $\varepsilon\in(0,1)$ is the time constant of the slow variables $y_i$, $\alpha_i>0$ are cellular positive feedback gains, and $S$ is a sigmoid function modeling intrinsic and synaptic nonlinearities. $A^d$ is the diffusive coupĺing adjacency matrix, and $A^e$ is the excitatory coupling adjacency matrix. Clearly, $(\vo{x}_0,\vo{y}_0)=(\vo{0}_N,\vo{0}_N)$ constitutes an equilibrium point for system (\ref{eq:gral}). When $A^e=O_N$ the coupling between the oscillators is \emph{purely diffusive}, and when $A^d=O_N$ the coupling is \emph{purely excitatory}. The two matrices $A^d$ and $A^e$ define the diffusive $\mathcal G^d(V,A^d)$ and excitatory $\mathcal G^e(V,A^e)$ digraphs, respectively.
%establish different types of connections between nodes $i$ and $j$, it may be convenient to think of them as adjacency matrices of weighted digraphs. Throughout this work, many statements will be made either in matrix language or their Graph Theory analogs.

\subsection{Transitions from damped to sustained oscillations ruled by a Hopf bifurcation}\label{subsec:two}

We now show that the parameter $\alpha$ rules the transition from damped to sustained slow-fast oscillations for uncoupled oscillators. Consider model (\ref{eq:gral}) for $N=1$, which reduces to the single-oscillator model
\begin{align*}
  \dot x&=-x-y+S(\alpha x),\\
  \dot y&=\varepsilon(x-y),
\end{align*}
%with $\varepsilon\in(0,1)$, $S:\mathds{R}\to\mathds{R}$, and $\alpha>0$ as before.
The Jacobian matrix evaluated at equilibrium is readily computed as
$J(0,0)=\begin{pmatrix}
\alpha-1 & -1\\
\varepsilon & -\varepsilon
\end{pmatrix}$,
which leads to the pair of eigenvalues %taken as functions of $\alpha>0$ and $\varepsilon\in(0,1)$
%are easily obtained as 
$\lambda_{1,2}=\tfrac{\alpha -(1+\varepsilon)}{2}\pm\tfrac{\sqrt{(\alpha +\varepsilon-1)^2-4\varepsilon}}{2}$.
For
\begin{equation}\label{eq:alpha hopf}
\alpha_H=1+\varepsilon>0    
\end{equation}
 both eigenvalues are purely imaginary. Moreover, continuity of the discriminant function $\Delta$ guarantees $\lambda_{1,2}\in\mathds{C}\backslash\mathds{R}$ for $\alpha$ sufficiently close to $\alpha_H$. Furthermore, observe that
$$\dfrac{\partial\mathrm{Re}(\lambda_{1,2})}{\partial\alpha}(\alpha_H)=\dfrac{1}{2}\not=0.$$
%and so transversality of the bifurcation is verified. 
Invoking~\cite[Theorem 3.5.2]{guckenheimer83}, we can conclude the existence of a simple Hopf bifurcation for $\alpha=\alpha_H$, at which the model transitions from damped ($\alpha<\alpha_H$) to sustained ($\alpha>\alpha_H$) oscillations. The global validity of this result can be proved via Lyapunov and Poincaré-Bendixon arguments \cite[Theorem 1.8.1]{guckenheimer83}. We do not include it here due to space limitations.

\subsection{In-phase oscillations from network positive feedback between two coupled oscillators}

Consider model (\ref{eq:gral}) in the low-dimensional case $N=2$, $\alpha_1=\alpha_2=0$,  $A^d=O_2$, and
$$A^e=\left(
\begin{array}{cc}
    0 & \beta_2 \\
    \beta_1 & 0
\end{array}\right).$$
It is easy to show that the model undergoes a network Hopf bifurcation along the parametric curve $\sqrt{\beta_1\beta_2}=1+\varepsilon$, at which point the oscillators start to oscillate in phase, as shown in Fig. \ref{fig:exci1}. Observe that the uncoupled oscillators are damped in this case. The positive feedback brought by network interactions has the double role of both igniting and synchronizing the emergent oscillations.
%Fixating any of the two communication parameters, it can be seen that this bifurcation indeed corresponds to a Hopf bifurcation at critical value $\beta_i^*=\tfrac{(1+\varepsilon)^2}{\beta_j}$.
\begin{figure}
    \centering
    \includegraphics[width=0.45\textwidth]{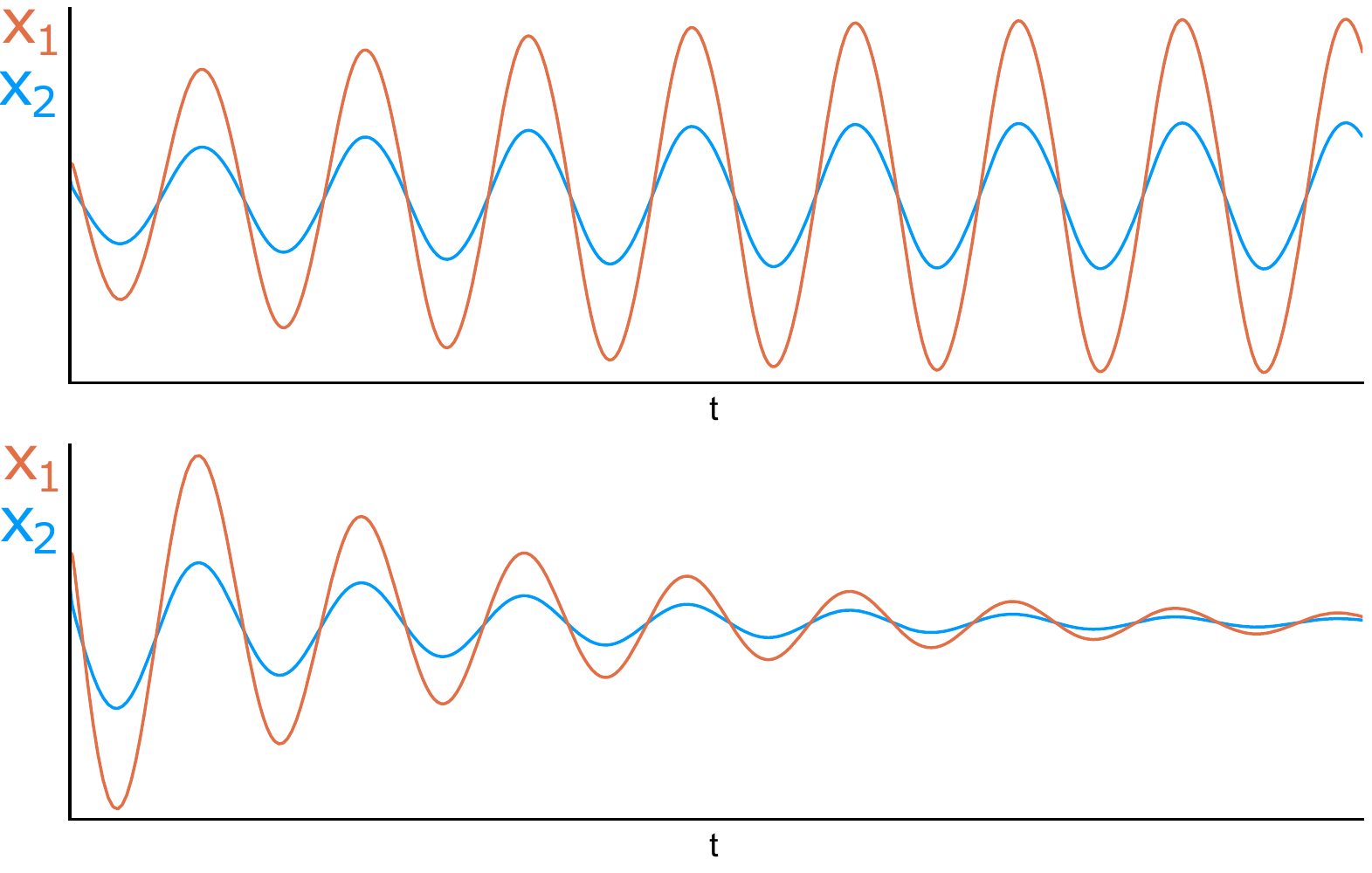}
    \caption{Model (\ref{eq:gral}) in the particular case $N=2$ and parameters as shown in Subsection \ref{subsec:two}. Variables $x_1$ and $x_2$ (in red and blue, respectively) are seen to oscillate in phase when excitatory parameters are taken near the curve $\beta_1\beta_2=(1+\varepsilon)^2$. Particular values for this graphs were $\varepsilon = 0.01$, $\beta_2 = 2.5$ and $\beta_1=\tfrac{(1+\varepsilon)^2}{\beta_2}+0.01$ (top) and $\beta_1=\tfrac{(1+\varepsilon)^2}{\beta_2}-0.01$ (bottom). Image generated using \texttt{Julia 1.5.2}.}
    \label{fig:exci1}
\end{figure}

We will show that the behavior observed in this low-dimensional example is impossible in general if the coupling is diffusive, whereas it generalizes to arbitrary strongly connected excitatory coupling topologies.

\section{A useful lemma}
During our discussion, we will find several block-wise defined matrices of the form
\begin{equation}\label{mat:J}
J=\left(\begin{array}{c|c}
    -aI_N+c M & -I_N \\
    \hline 
    \varepsilon I_N+d M & -\varepsilon I_N 
\end{array}\right),    
\end{equation}
where $M\in\mathscr{M}_{N\times N}$ is any real matrix, $\varepsilon>0$ is a (small) real constant, and $a,c,d\in\R$. The following general lemma will turn out very useful in our analysis.

\begin{lem}\label{lem:block}
Let $J\in\mathscr{M}_{2N\times 2N}$ be of form (\ref{mat:J}). Then its characteristic polynomial $p(\lambda)$, for $\lambda\not=-\varepsilon$, is obtained as
\begin{equation}\label{eq:lem1}
(\varepsilon+\lambda)^N\abs{\left(a+\lambda+\tfrac{\varepsilon}{\varepsilon+\lambda}\right)I_N-\left(c-\tfrac{d}{\varepsilon+\lambda}\right)M},    
\end{equation}
Moreover, any eigenvector $(\vo{x},\vo{y})\in\mathds{R}^{2N}$ of $J$, corresponding to an eigenvalue $\lambda\in\mathds{C}\backslash\{-\varepsilon\}$, must satisfy
\begin{equation}\label{eq:lem2}
\begin{split}
\vo{y}&=\tfrac{1}{\varepsilon+\lambda}(\varepsilon I_N+dM)\vo{x},\\ (c-\tfrac{d}{\varepsilon+\lambda})M\vo{x}&=(a+\lambda+\tfrac{\varepsilon}{\varepsilon+\lambda})\vo{x}.    
\end{split}
\end{equation}
\end{lem}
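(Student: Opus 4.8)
The plan is to exploit the $2\times2$ block structure of $J$ and collapse the $2N\times2N$ determinant to a single $N\times N$ one by means of a Schur complement. Writing $J-\lambda I_{2N}$ with four $N\times N$ blocks, the top-left is $A=-(a+\lambda)I_N+cM$, the top-right is $B=-I_N$, the bottom-left is $C=\varepsilon I_N+dM$, and the bottom-right is $D=-(\varepsilon+\lambda)I_N$. The key observation is that $D$ is a scalar multiple of the identity and hence invertible exactly when $\lambda\neq-\varepsilon$; this is precisely why that case is excluded from the statement.

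Under $\lambda\neq-\varepsilon$ I would invoke the Schur-complement determinant identity $\abs{J-\lambda I_{2N}}=\abs{D}\cdot\abs{A-BD^{-1}C}$. Here $\abs{D}=(-1)^N(\varepsilon+\lambda)^N$ and $D^{-1}=-\tfrac{1}{\varepsilon+\lambda}I_N$, so a one-line computation gives $BD^{-1}C=\tfrac{1}{\varepsilon+\lambda}(\varepsilon I_N+dM)$ and therefore $A-BD^{-1}C=-\big(a+\lambda+\tfrac{\varepsilon}{\varepsilon+\lambda}\big)I_N+\big(c-\tfrac{d}{\varepsilon+\lambda}\big)M$. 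Pulling $(-1)^N$ out of the determinant of this last $N\times N$ matrix produces a second factor $(-1)^N$ that cancels the one in $\abs{D}$, leaving exactly the expression (\ref{eq:lem1}).

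For the eigenvector relations (\ref{eq:lem2}) I would expand the defining equation $(J-\lambda I_{2N})(\vo{x},\vo{y})^{\top}=\vo{0}_{2N}$ into its two block rows. The lower row reads $(\varepsilon I_N+dM)\vo{x}-(\varepsilon+\lambda)\vo{y}=\vo{0}_N$, which, using once more $\lambda\neq-\varepsilon$, solves to $\vo{y}=\tfrac{1}{\varepsilon+\lambda}(\varepsilon I_N+dM)\vo{x}$, the first line of (\ref{eq:lem2}). Substituting this into the upper row $\big(-(a+\lambda)I_N+cM\big)\vo{x}-\vo{y}=\vo{0}_N$ and separating the terms in $M\vo{x}$ from those in $\vo{x}$ yields the second line of (\ref{eq:lem2}).

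There is no deep obstacle: the result is a bookkeeping exercise whose only delicate point is the sign and factor tracking, in particular checking that the two factors of $(-1)^N$ cancel so that no stray sign survives in (\ref{eq:lem1}). I would also note, as a sanity check rather than a necessity, that every block of $J$ is an affine function of $M$, so all four blocks pairwise commute; this makes the Schur reduction consistent with the commuting-block determinant formula, though the Schur route is the most economical.
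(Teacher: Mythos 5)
Your proposal is correct and follows essentially the same route as the paper: the paper's ``determinant formula for block-wise defined matrices'' is exactly the Schur-complement identity $\abs{J-\lambda I_{2N}}=\abs{D}\,\abs{A-BD^{-1}C}$ you invoke, and the eigenvector relations are obtained in both cases by solving the lower block row for $\vo{y}$ and substituting into the upper one. Your explicit tracking of the two cancelling $(-1)^N$ factors is a detail the paper leaves implicit, but the argument is the same.
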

\begin{proof}
Obtaining the characteristic polynomial only requires us to apply the determinant formula for block-wise defined matrices \cite{powell11} to
$$\abs{J-\lambda I_{2N}}=\abs{\begin{array}{c|c}
    -(a+\lambda)I_N+c M & -I_N \\
    \hline 
    \varepsilon I_N+d M & -(\varepsilon+\lambda) I_N\end{array}}.$$
This will require $-(\varepsilon+\lambda)I_N$ to be invertible, which imposes $\lambda\not=-\varepsilon$. Thus, the formula calculates $p(\lambda)$ as
$$\abs{-(\varepsilon+\lambda)I_N}\abs{-(a+\lambda)I_N+cM-\tfrac{1}{\varepsilon+\lambda}(\varepsilon I_N+dM)},$$
whence (\ref{eq:lem1}) follows. As for the eigenvector condition (\ref{eq:lem2}), consider $\lambda\in\mathds{C}\backslash\{-\varepsilon\}$ a root of $p(\lambda)$ as given above, and suppose $(\vo{x},\vo{y})\in\mathds{R}^{2N}$ satisfies
$$\left(\begin{array}{c|c}
    -(a+\lambda)I_N+c M & -I_N \\
    \hline 
    \varepsilon I_N+d M & -(\varepsilon+\lambda) I_N\end{array}\right)
    \left(\begin{array}{c}
         \vo{x}  \\
         \hline
         \vo{y}
    \end{array}\right)=\left(\begin{array}{c}
         \vo{0}_N  \\
         \hline
         \vo{0}_N
    \end{array}\right),$$
This yields the linear system
\begin{align*}
    (-(a+\lambda)I_N+c M)\vo{x}-\vo{y}&=\vo{0}_N,\\
    (\varepsilon I_N+d M)\vo{x}-(\varepsilon+\lambda)\vo{y}&=\vo{0}_N.
\end{align*}
Given $\varepsilon+\lambda\not=0$, we may solve for $\vo{y}$ in the second equation as $\vo{y}=\tfrac{1}{\varepsilon+\lambda}(\varepsilon I_N+dM)\vo{x}$. We then substitute this into our first equation, getting 
$$(-(a+\lambda)I_N+c M)\vo{x}-\tfrac{1}{\varepsilon+\lambda}(\varepsilon I_N+dM)\vo{x}=\vo{0}_N.$$
From here the second eigenvector condition follows, thus concluding this proof.%\hfill$\square$
\end{proof} 

The relevance of this lemma lies in that the matrix $M$ fully characterizes the spectral properties of the higher-dimensional matrix $J$. More precisely, equation (\ref{eq:lem1}) establishes a one-to-two correspondence between the eigenvalues of $J$ and those of $M$. Equation (\ref{eq:lem2}) establishes a similar correspondence between eigenvectors of these two matrices.
%: namely, every $M$-eigenvector $\vo{x}$ yields two $J$- eigenvectors $(\vo{x},\vo{y}_i)$ where $\vo{y}$ may be found through said equation.

\section{Diffusive coupling cannot trigger sustained synchronous oscillations in networks of damped oscillators}

In this section we show that global rhythms are not sustainable within networks of damped nodes that are coupled diffusively (as a matter of fact, under such conditions, sustained synchronous oscillations are possible if individual feedback is high enough, that is, if every node is an intrinsic oscillator).
%Next we give a local argument to conclude that in the purely diffusive case of equations (\ref{eq:gral}) all activity converges to equilibrium if the network is \emph{damped enough}. 

% Reestructurar

\begin{theo}\label{thm:diffusive}
Consider model (1) with $A^e=O_N$ and $\alpha_i=\alpha<1$ for all $i\in V=\{1,\ldots,N\}$, and $A^d\in\mathscr{M}_{N\times N}$ an arbitrary non-negative weighted adjacency matrix. Then, for sufficiently small $\varepsilon>0$ the origin is locally exponentially stable.
\end{theo}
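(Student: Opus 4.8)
The plan is to linearize \eqref{eq:gral} at the origin, recognize the Jacobian as an instance of the block form \eqref{mat:J}, and then read off stability from Lemma~\ref{lem:block}. First I would note that with $A^e=O_N$ the only nonlinearity entering the $\dot x_i$ equation is $S(\alpha x_i)$, whose linearization at $0$ contributes $\alpha S'(0)=\alpha$ on the diagonal, while the (already linear) diffusive term $\sum_j A^d_{ij}(x_j-x_i)$ contributes $A^d-D^-=-L^-$, with $L^-$ the in-degree Laplacian of $\mathcal G^d$. Hence the Jacobian at the origin is
\begin{equation*}
J=\begin{pmatrix} (\alpha-1)I_N-L^- & -I_N \\ \varepsilon I_N & -\varepsilon I_N\end{pmatrix},
\end{equation*}
which is exactly of the form \eqref{mat:J} with $M=L^-$, $a=1-\alpha$, $c=-1$, and $d=0$; crucially $a>0$ because $\alpha<1$.

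Next I would apply Lemma~\ref{lem:block}. Substituting $c=-1$, $d=0$ into \eqref{eq:lem1}, factoring the determinant over the eigenvalues $\mu_1=0,\mu_2,\dots,\mu_N$ of $L^-$ (counted with multiplicity), and distributing the factor $(\varepsilon+\lambda)^N$ across the product yields the characteristic polynomial $\prod_{k=1}^N\big[(\varepsilon+\lambda)(a+\mu_k+\lambda)+\varepsilon\big]$. Thus every eigenvalue of $J$ solves, for some $k$, the scalar quadratic
\begin{equation*}
\lambda^2+(\varepsilon+b_k)\lambda+\varepsilon(b_k+1)=0, \qquad b_k:=a+\mu_k,
\end{equation*}
and I record that $\mathrm{Re}(b_k)=a+\mathrm{Re}(\mu_k)\geq a>0$, since the Laplacian eigenvalues satisfy $\mathrm{Re}(\mu_k)\geq 0$.

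The core of the argument is a small-$\varepsilon$ analysis of this quadratic. At $\varepsilon=0$ its roots are $\lambda=0$ and $\lambda=-b_k$; the second already satisfies $\mathrm{Re}(-b_k)\leq-a<0$ and, by continuity of roots in $\varepsilon$, remains in the open left half-plane for small $\varepsilon$. The root emanating from $0$ is simple (because $b_k\neq0$), so I would expand $\lambda(\varepsilon)=\varepsilon\lambda_1+O(\varepsilon^2)$ and match the $O(\varepsilon)$ terms to get $\lambda_1=-\tfrac{b_k+1}{b_k}=-\big(1+\tfrac1{b_k}\big)$, whence $\mathrm{Re}(\lambda_1)=-\big(1+\tfrac{\mathrm{Re}(b_k)}{\abs{b_k}^2}\big)<-1<0$. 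Hence this root too moves into the open left half-plane for small $\varepsilon>0$. As there are only finitely many distinct $b_k$, I can select a single threshold $\varepsilon^\star>0$ below which all $2N$ roots of $p$ have strictly negative real part, and conclude local exponential stability of the origin by the principle of linearized stability.

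The main obstacle I anticipate is that, since $\mathcal G^d$ is an arbitrary (possibly non-symmetric) digraph, the eigenvalues $\mu_k$---and therefore the coefficients of the quadratic---are genuinely complex, so the usual real Routh--Hurwitz test does not apply directly. The perturbation/continuity argument above is precisely what circumvents this: it reduces the complex-coefficient stability question to the sign of $\mathrm{Re}(b_k)$, which is controlled by $a>0$ together with $\mathrm{Re}(\mu_k)\geq0$. A secondary point to dispatch is that Lemma~\ref{lem:block} excludes $\lambda=-\varepsilon$, but this is harmless: $-\varepsilon$ already lies in the left half-plane, and the factored expression is a genuine polynomial identity that extends to $\lambda=-\varepsilon$ by continuity.
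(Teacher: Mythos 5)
Your proposal is correct and takes essentially the same route as the paper: linearize, apply Lemma~\ref{lem:block} with $M=L^-$, $a=1-\alpha$, $c=-1$, $d=0$, reduce to the per-eigenvalue quadratic \eqref{eq:lambdamu}, and show that at $\varepsilon=0$ one root is $-b_k$ (strictly stable) while the root emanating from $0$ moves left at first order in $\varepsilon$. Your complex first-order expansion giving $\mathrm{Re}(\lambda_1)=-1-\mathrm{Re}(b_k)/\abs{b_k}^2<0$ is just a cleaner packaging of the paper's real/imaginary-part Implicit Function Theorem computation (the two quantities coincide exactly), and your explicit handling of the excluded value $\lambda=-\varepsilon$ and of the uniform threshold $\varepsilon^\star$ over the finitely many $k$ are points the paper leaves implicit.
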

\begin{proof}
Observe that
$$\dfrac{\partial\dot{x}_i}{\partial x_i} = \alpha S'(0)-1-\sum_{j\not=i} A_{ij}^d=\alpha S'(0)-1-\partial_i^-.$$
Thus, the model Jacobian computed at equilibrium is given by
\begin{align*}
J^d=J(\vo{0},\vo{0})&=\left(\begin{array}{c|c}
(\alpha-1)I_N-D^-+A^d & -I_N\\
\hline
\varepsilon I_N & -\varepsilon I_N
\end{array}\right)\\
&=\left(\begin{array}{c|c}
(\alpha -1)I_N-L^- & -I_N\\
\hline
\varepsilon I_N & -\varepsilon I_N
\end{array}\right),    
\end{align*}
where $L^-$ is the in-Laplacian matrix associated to $\mathcal G^d(V,A^d)$,
which is exactly in the form of Lemma \ref{lem:block}, with matrices $J=J^d$, $M=L^-$, and parameters $a=1-\alpha$, $c=-1$, $d=0$. The associated characteristic polynomial reads
$$p(\lambda)=(\varepsilon+\lambda)^N\abs{L^--(\alpha -1-\lambda-\tfrac{\varepsilon}{\varepsilon+\lambda})I_N}.$$
%whenever $\lambda\not=-\varepsilon$, but we may extend by continuity if needed.
Let $\mu_1,\ldots,\mu_N$ be the eigenvalues of $L^-$ and recall that ${\rm Re}(\mu_i)\geq \mu_1=0$ for all $i\in\{1,\ldots,N\}$. Then any eigenvalue $\lambda$ of $J^d$ satisfies
$\alpha-1-\lambda-\tfrac{\varepsilon}{\varepsilon+\lambda}=\mu_k,$
which is equivalent to
{
\begin{equation}\label{eq:lambdamu}
\lambda^2+(\mu_k+1+\varepsilon-\alpha )\lambda+\varepsilon(\mu_k+2-\alpha )=0.  
\end{equation}
}
Thus, each $L^-$-eigenvalue $\mu_k$ yields two $J$-eigenvalues
$\lambda_{2k-1}=\lambda^-_k$ and $\lambda_{2k}=\lambda^+_k$, where
{
\begin{equation}\label{eq:valgral}
\lambda^{\pm}_k=\dfrac{\alpha\!-\!1\!-\!\varepsilon\!-\!\mu_k\pm\sqrt{(\mu_k+1-\varepsilon-\alpha)^2-4\varepsilon}}{2}  
\end{equation}
}
% Agregar ecuación
{for $k\in\{1,\ldots,N\}$. Setting $\varepsilon=0$ in (\ref{eq:lambdamu}) yields $\lambda_k^-=0$ with multiplicity $m=N$, and $\lambda_{k}^+=\alpha-1-\mu_k$, which satisfies $\mathrm{Re}(\lambda_{k}^+)=\alpha-1-\mathrm{Re}(\mu_k)<0$. By continuity, the real parts of eigenvalues $\lambda_{k}^+$ remains negative for sufficiently small $\varepsilon>0$. To guarantee a similar result for $\lambda_{k}^-$,
%a more subtle argument is required.
we can split (\ref{eq:lambdamu}) into its real and imaginary parts. Letting $\lambda=\sigma+i\tau$ and $\mu=u+iv$, we get
\begin{align*}
    &\sigma^2-\tau^2+\sigma(u+\varepsilon+1-\alpha)-v\tau+\varepsilon(2-\alpha+u)=0,\\
    &2\sigma\tau+\sigma v+\tau(u+\varepsilon+1-\alpha)+\varepsilon v=0,
\end{align*}
which can be interpreted as zero-level sets of some functions $F$, $G$, respectively. Using the Implicit Function Theorem, variables $\sigma$ and $\tau$ can be expressed as functions $S$, $T$ of the remaining variables $\varepsilon$, $u$, $v$ whenever
$$\dfrac{\partial(F,G)}{\partial(\sigma,\tau)}=(2\sigma+(u+\varepsilon+1-\alpha))^2+(2\tau+v)^2\not=0$$
is satisfied. The derivative $\tfrac{\partial S}{\partial \varepsilon}$ is readily obtained by implicit differentiation as
$$\dfrac{(\alpha-\sigma-2-u)(2\sigma+1+\varepsilon+u-\alpha)-(2\tau+v)(\tau+v)}{(2\sigma+(u+\varepsilon+1-\alpha))^2+(2\tau+v)^2}$$
which is negative for $\varepsilon=\sigma=\tau=0$, given that $u=Re(\mu)\geqslant0$ and $\alpha<1$. recall that $\lambda^-_{k}=0$ is obtained as a zero of (\ref{eq:lambdamu}) when $\varepsilon=0$. Thus,
%all $L^-$-eigenvalues have non-negative real parts, every $J^d$-eigenvalues has its real part dominated by that of $\lambda_{1,2}$, which correspond to $L^-$-eigenvalue $\mu_1=0$. Then $\lambda_{1,2}$ are exactly those found in the $N=1$ case, and
for positive and {sufficiently small} values of $\varepsilon$, every $J^d$-eigenvalue has negative real part, and the equilibrium at the origin is locally exponentially stable.}  %\hfill $\square$
\end{proof}

Theorem \ref{thm:diffusive} shows that diffusive coupling requires intrinsic oscillators to lead to synchronous network oscillations. One could provide a global proof by means of Lyapunov functions and convergent systems analysis \cite{pavlov05}. We omit this proof due to space constraints.
%as it does not lead to any extra mechanistic insight.

\section{Network and cellular positive feedback cooperate in triggering synchronous oscillations in networks of slow-fast damped nodes}

We now turn to the network positive feedback present in model (\ref{eq:gral}), for $A^d=O_N$ and non-negative $A^e$. Throughout this section we will make the standing \emph{homogeneity} assumption $\alpha_i=\alpha$ for every $i\in\{1,\ldots,N\}$, \emph{i.e.}, we assume that the uncoupled oscillators are identical. We will relax this homogeneity assumption in future works. We also let $A^e=\beta A$, where $A$ is a simple matrix. The two parameters $\alpha>0$, $\beta>0$ govern cellular and network positive feedback, respectively. Then, the Jacobian of model (\ref{eq:gral}) evaluated at its equilibrium at the origin reads 
$$J^e=J(\vo{0}_N,\vo{0}_N)=\left(\begin{array}{c|c}
-(1-\alpha)I_N+\beta A     & -I_N \\
\hline
\varepsilon I_N     & -\varepsilon I_N
\end{array}\right). $$
So we may apply Lemma \ref{lem:block} to matrix $J=J^e$, considering $M=A$, $a=1-\alpha$, $c=\beta$, $d=0$, to arrive at the following result.

\begin{lem}\label{lem:feed}
Let $A\in\mathscr{M}_{N\times N}$ be a simple, non-negative matrix, and consider model (\ref{eq:gral}) with $A^d=O_N$, $A^e=\beta A$ and $\alpha_i=\alpha$, where $\alpha\geqslant0$ and $\beta>0$ are non-negative parameters. Let $J^e$ be the Jacobian matrix of this system evaluated at equilibrium $(\vo{0}_N,\vo{0}_N)$. Then any eigenvector $(\vo{x},\vo{y})$ of $J^e$, associated to eigenvalue $\lambda$, must satisfy the conditions
\begin{equation}\label{eq:vect}
    \vo{y}=\tfrac{\varepsilon}{\varepsilon+\lambda}\vo{x},\ \ \beta A\vo{x}=(1-\alpha+\lambda+\tfrac{\varepsilon}{\varepsilon+\lambda})\vo{x}.
\end{equation}
Moreover, by letting $\mu_1,\ldots,\mu_N$ be the eigenvalues of matrix $\beta A$, we obtain for each $\mu_k$ two $J^e$-eigenvalues $\lambda_k^{\pm}$, where
\begin{equation}\label{eq:val}
\lambda^{\pm}_k\!=\!\dfrac{\mu_k\!+\!\alpha\!-\!1\!-\!\varepsilon\!\pm\sqrt{(\mu_k\!+\!\alpha\!-\!1\!-\!\varepsilon)^2\!-\!4\varepsilon(2\!-\!\alpha\!-\!\mu_k)}}{2}.
\end{equation}
\end{lem}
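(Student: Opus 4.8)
The plan is to obtain every assertion by directly specializing Lemma \ref{lem:block} to the matrix $J^e$, whose block structure already matches (\ref{mat:J}) under the parameter identification $M=A$, $a=1-\alpha$, $c=\beta$, $d=0$ indicated just above the statement. First I would substitute these values into the general eigenvector conditions (\ref{eq:lem2}). Since $d=0$, the factor $\varepsilon I_N+dM$ collapses to $\varepsilon I_N$, so the first condition becomes $\vo{y}=\tfrac{\varepsilon}{\varepsilon+\lambda}\vo{x}$; likewise the scalar $c-\tfrac{d}{\varepsilon+\lambda}$ collapses to $\beta$, so the second condition becomes $\beta A\vo{x}=(1-\alpha+\lambda+\tfrac{\varepsilon}{\varepsilon+\lambda})\vo{x}$. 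These are exactly the two conditions in (\ref{eq:vect}), so the eigenvector part requires no further work.

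For the eigenvalue formula (\ref{eq:val}), I would read off from the second condition in (\ref{eq:vect}) that $\vo{x}$ is an eigenvector of $\beta A$ with eigenvalue $\mu_k=1-\alpha+\lambda+\tfrac{\varepsilon}{\varepsilon+\lambda}$. The remaining task is purely algebraic: multiply through by $(\varepsilon+\lambda)$ to clear the denominator and collect terms, which yields the quadratic
\begin{equation*}
\lambda^2+(1-\alpha+\varepsilon-\mu_k)\lambda+\varepsilon(2-\alpha-\mu_k)=0.
\end{equation*}
Applying the quadratic formula and using $-(1-\alpha+\varepsilon-\mu_k)=\mu_k+\alpha-1-\varepsilon$ gives precisely the two roots $\lambda_k^{\pm}$ displayed in (\ref{eq:val}). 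Since $\beta A$ has $N$ eigenvalues $\mu_1,\ldots,\mu_N$ counted with multiplicity and each contributes two roots, this accounts for all $2N$ roots of the degree-$2N$ characteristic polynomial of $J^e$.

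The one point deserving a moment of care is the value $\lambda=-\varepsilon$ excluded by Lemma \ref{lem:block}, since both the factor $\tfrac{\varepsilon}{\varepsilon+\lambda}$ and the underlying block-determinant formula require $\varepsilon+\lambda\neq0$. The cleanest way to dispatch this is to substitute $\lambda=-\varepsilon$ into the quadratic above and check that its left-hand side evaluates to $\varepsilon>0$ for every $\mu_k$; hence $\lambda=-\varepsilon$ is never a root. This confirms simultaneously that Lemma \ref{lem:block} applies to every eigenvalue of $J^e$ and that the correspondence $\mu_k\mapsto\lambda_k^{\pm}$ is exhaustive, so nothing is lost by the exclusion. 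I do not anticipate any genuine obstacle: the whole argument is a specialization of the previous lemma followed by a short, routine computation, the only real bookkeeping being this verification that the excluded value does not occur.
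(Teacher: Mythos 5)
Your proposal is correct and follows essentially the same route as the paper's proof: specialize Lemma~\ref{lem:block} (with the equivalent parameter choice $M=A$, $c=\beta$ rather than the paper's $M=\beta A$, $c=1$), read off the eigenvector conditions, and solve the resulting quadratic $\lambda^2+(1+\varepsilon-\alpha-\mu_k)\lambda+(2-\alpha-\mu_k)\varepsilon=0$ for $\lambda_k^{\pm}$. Your extra check that substituting $\lambda=-\varepsilon$ into the quadratic yields $\varepsilon>0$, so the value excluded by Lemma~\ref{lem:block} is never an eigenvalue and the correspondence $\mu_k\mapsto\lambda_k^{\pm}$ is exhaustive, is a genuine (if small) improvement in rigor over the paper, which passes over this point silently.
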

\begin{proof}
Apply Lemma \ref{lem:block} with $J=J^e$, $M=\beta A$, $a=1-\alpha$, $c=1$, $d=0$. Then equation (\ref{eq:vect}) holds. This implies that the eigenvalues of $\beta A$ and $J^e$ are linked by the expression $1-\alpha+\lambda+\tfrac{\varepsilon}{\varepsilon+\lambda}=\mu_k,$
which is equivalent to $\lambda^2+(1+\varepsilon-\alpha-\mu_k)\lambda+(2-\alpha-\mu_k)\varepsilon=0$.
From here we obtain two $J^e$-eigenvalues $\lambda^{\pm}_{k}$, which are indeed given by expression (\ref{eq:val}), thus ending the proof. %\hfill $\square$
\end{proof}

The relationship established at the end of Lemma \ref{lem:block} has become clearer, in that equation (\ref{eq:val}) explicitly determines $J^e$-eigenvalues as functions of $\beta A$-eigenvalues (and, therefore, of $A$-eigenvalues).
%Similarly, additional hypothesis $d=0$ simplifies the eigenvector condition (\ref{eq:lem1}) into equation (\ref{eq:vect}).
Thus, spectral analysis of matrix $J^e$, \emph{i.e.} local analysis of purely excitatory system (\ref{eq:gral}) under homogeneity hypotheses, reduces to spectral analysis of adjacency matrix $A$.

\subsection{In-regular homogeneous network}

We start by showing that if the coupling topology is in-regular, then model (\ref{eq:gral}) undergoes a Hopf-bifurcation for strong enough cellular and network positive feedback. Furthermore, because $\vo{1}_N$ is the dominant eigenvector of the adjacency matrix, the Hopf bifurcation happens along the synchronization space where each oscillator has the same state. That is, the network Hopf bifurcation leads to synchronous network oscillations.

%For this result we will suppose the digraph to be \emph{in-regular}, which implies that vector $\vo{1}_N$ is an eigenvector of the adjacency matrix. The first theorem of the excitatory case deals with conditions under which one would expect identical, synchronous oscillations.  

\begin{theo}\label{theo:inreg}
Let $A\in\mathscr{M}_{N\times N}$ be an irreducible, simple, non-negative matrix associated to a strongly connected in-regular digraph of global in-degree $d^->0$, and consider model (\ref{eq:gral}) with $A^d=O_N$, $A^e=\beta A$ and $\alpha_i=\alpha$, where $\alpha\in[0,1)$ and $\beta>0$. Then, for sufficiently small $\varepsilon>0$ the system  undergoes a Hopf bifurcation along the parametric curve $\beta=\tfrac{1+\varepsilon-\alpha}{d^-}$. Moreover, the center manifold associated to the bifurcation is tangent to the synchronization subspace
$$E = \{(r\vo{1}_N,(\varepsilon r+\sqrt{\varepsilon(1-\varepsilon)}s)\vo{1}_N)\in\mathds{R}^{2N}:\,(r,s)\in\mathds{R}^2 \}$$
and is locally exponentially stable.
\end{theo}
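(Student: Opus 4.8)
The plan is to reduce the spectral analysis of the $2N\times 2N$ Jacobian $J^e$ to the scalar quadratics furnished by Lemma~\ref{lem:feed}, and then to isolate the Perron--Frobenius mode as the one crossing the imaginary axis. First I would record the structural consequences of the hypotheses: in-regularity forces constant row sums, so $A\vo{1}_N=d^-\vo{1}_N$; irreducibility and non-negativity then make $d^-$ a \emph{simple} eigenvalue equal to the spectral radius (Perron--Frobenius), with every other eigenvalue $\nu$ of $A$ satisfying $\mathrm{Re}(\nu)<d^-$ (strict, even in the imprimitive case, since the peripheral eigenvalues $d^-e^{2\pi i j/h}$ have real part $<d^-$). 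Hence the eigenvalues $\mu_k$ of $\beta A$ satisfy $\mu_1=\beta d^->\mathrm{Re}(\mu_k)$ for $k\neq 1$, with $\mu_1$ real and dominant, and by Lemma~\ref{lem:feed} the spectrum of $J^e$ is the union over $k$ of the roots of
\[
q_k(\lambda)=\lambda^2+(1+\varepsilon-\alpha-\mu_k)\lambda+(2-\alpha-\mu_k)\varepsilon .
\]

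Next I locate the bifurcation. A real quadratic $\lambda^2+b\lambda+c$ has a purely imaginary pair exactly when $b=0$, $c>0$; applied to the dominant mode $\mu_1=\beta d^-$ this gives the curve $\beta=\beta_H:=\tfrac{1+\varepsilon-\alpha}{d^-}$, along which $c=(1-\varepsilon)\varepsilon>0$, so the critical pair is $\pm i\omega$ with $\omega=\sqrt{\varepsilon(1-\varepsilon)}$. Transversality is a one-line implicit differentiation of $q_1(\lambda)=0$ with respect to $\beta$: at $\lambda=i\omega$, $\beta=\beta_H$ the linear coefficient vanishes and one obtains $\tfrac{d\lambda}{d\beta}=\tfrac{d^-(\varepsilon+i\omega)}{2i\omega}$, whence $\mathrm{Re}\!\left(\tfrac{d\lambda}{d\beta}\right)=\tfrac{d^-}{2}>0$.

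The main obstacle is to verify that at $\beta=\beta_H$ every remaining eigenvalue sits strictly in the open left half-plane, so that the crossing is a \emph{simple} Hopf bifurcation; I would argue by continuity in $\varepsilon$, as in the proof of Theorem~\ref{thm:diffusive}. At $\varepsilon=0$ each $q_k$ factors as $\lambda\bigl(\lambda-(\mu_k-(1-\alpha))\bigr)$, and for $k\neq 1$ the nonzero root has real part $\mathrm{Re}(\mu_k)-(1-\alpha)<0$. The root emanating from $\lambda=0$ must be tracked for small $\varepsilon>0$: writing $\mu_k=u+iv$ and $B:=1+\varepsilon-\alpha-u>0$, the small root expands as $\lambda\approx-\tfrac{(2-\alpha-\mu_k)\varepsilon}{1+\varepsilon-\alpha-\mu_k}$, and a short computation gives $\mathrm{Re}(\lambda)\approx-\varepsilon\,\tfrac{B(B+1-\varepsilon)+v^2}{B^2+v^2}<0$. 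The only delicate point is to make these estimates uniform over the finite set $k\neq 1$; this is exactly where the strict gap $\max_{k\neq1}\mathrm{Re}(\mu_k)<\mu_1$ from irreducibility keeps $B$ bounded away from zero (equivalently, one may run a complex Routh--Hurwitz test directly on $q_k$). With a single conjugate pair on the imaginary axis and nonzero crossing speed, I invoke \cite[Theorem 3.5.2]{guckenheimer83} to obtain the Hopf bifurcation along $\beta=\beta_H$.

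Finally, I identify the center manifold through the eigenvector conditions~(\ref{eq:vect}). Their second equation forces $\vo{x}$ to be a $\beta A$-eigenvector for $\mu_1=\beta_H d^-$, i.e.\ $\vo{x}\parallel\vo{1}_N$, and the first gives $\vo{y}=\tfrac{\varepsilon}{\varepsilon+\lambda}\vo{x}$. At $\lambda=i\omega$, using $\varepsilon^2+\omega^2=\varepsilon$, one finds $\tfrac{\varepsilon}{\varepsilon+i\omega}=\varepsilon-i\omega$, so the complex critical eigenvector is $(\vo{1}_N,(\varepsilon-i\omega)\vo{1}_N)$; its real and imaginary parts $(\vo{1}_N,\varepsilon\vo{1}_N)$ and $(\vo{0}_N,-\omega\vo{1}_N)$ span precisely $E$ (take $r=p$, $s=-q$). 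The center manifold is therefore tangent to this center eigenspace $E$. Local exponential attractivity of the manifold follows from normal hyperbolicity: by the previous step all $2N-2$ non-central eigenvalues have strictly negative real part at $\beta_H$, so nearby trajectories contract exponentially onto the manifold. This completes the proposed argument.
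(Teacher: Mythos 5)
Your proposal is correct and follows essentially the same route as the paper: reduce the spectrum of $J^e$ to the per-mode quadratics via Lemma~\ref{lem:feed}, single out the Perron mode $\mu_1=\beta d^-$ as the one crossing the imaginary axis at $\beta=\tfrac{1+\varepsilon-\alpha}{d^-}$, and recover the tangent space $E$ from the eigenvector conditions~(\ref{eq:vect}). In fact you supply two details the paper explicitly omits or glosses over --- the transversality computation $\mathrm{Re}\bigl(\tfrac{d\lambda}{d\beta}\bigr)=\tfrac{d^-}{2}>0$, and the small-$\varepsilon$ tracking argument (including the imprimitive peripheral eigenvalues) showing all non-dominant modes stay strictly in the open left half-plane --- so your write-up is, if anything, more complete on the same path.
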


\begin{proof} Given $A^e=\beta A$, where $A$ is in-regular, we conclude that $\mu_1=\beta d^->0$ is an eigenvalue with corresponding eigenvector $\vo{x}_1=\vo{1}_N$. Applying formula (\ref{eq:val}) to this eigenvalue yields two $J^e$-eigenvalues, namely $\lambda^{\pm}_{1}$, given by the expression
$$ \dfrac{\beta d+\alpha-1-\varepsilon\pm\sqrt{(\beta d+\alpha-1-\varepsilon)^2-4\varepsilon(2-\alpha-\beta d)}}{2}.$$
Thus for $\beta=\tfrac{1+\varepsilon-\alpha}{d^-}>0$, $\lambda^{\pm}_{1}$ are purely imaginary complex conjugates while all other eigenvalues have negative real part. Indeed, irreducibility of matrix $A$ makes it possible to apply the Perron-Frobenius Theorem which guarantees that the dominant eigenvalue $d^->0$ has algebraic and geometric multiplicity one.
%As all other eigenvalues must have their real parts less than $d^->0$, this concludes local convergence to the center manifold.
Transversality is also easily verified (we omit details here due to space constraint), which yields the Hopf bifurcation.
%and continuity of the discriminant function $\Delta=b^2-4ac$ guarantees that $\lambda_{1,2}$.
%Therefore the system undergoes a Hopf bifurcation at $\beta=\beta^*$, $\lambda_{1,2}=\pm i\sqrt{\varepsilon(1-\varepsilon)}$.
Conditions (\ref{eq:vect}) give us the eigenvectors $\vo{z}_{1,2}=(\vo{1}_N,(\varepsilon\mp i\sqrt{\varepsilon(1-\varepsilon)})\vo{1}_N)$ associated to eigenvalues $\lambda^{\pm}_{1}=\pm i\lambda$, $\lambda = \sqrt{\varepsilon(1-\varepsilon)}$. Now, identifying the real and imaginary parts $\vo{u}=(\vo{1}_N,\varepsilon\vo{1}_N)$, $\vo{v}=(\vo{0}_N,\sqrt{\varepsilon(1-\varepsilon)}\vo{1}_N)$ of the spanning vectors, it follows that
$J^e\vo{u}\mp iJ^e\vo{v}=J^e(\vo{u}\mp i\vo{v})=\pm i\lambda(\vo{u}\mp i\vo{v})=\lambda\vo{v}\pm i\lambda\vo{u}.$
$J^e$, $\vo{u}$, $\vo{v}$ and $\lambda$ are real (matrices, vectors and values), so this last equation implies $J^e(\vo{u})=\lambda\vo{v}$ and $J^e\vo{v}=-\lambda\vo{u}$. Therefore the center manifold is tangent to the span of vectors $\vo{u}$, $\vo{v}$, whence the form of subspace $E$ is obtained. Because all other eigenvalues have negative real part, the associated center is locally exponentially attractive. %\hfill $\square$ 
\end{proof}
\begin{figure}
    \centering
    \includegraphics[width=0.47\textwidth]{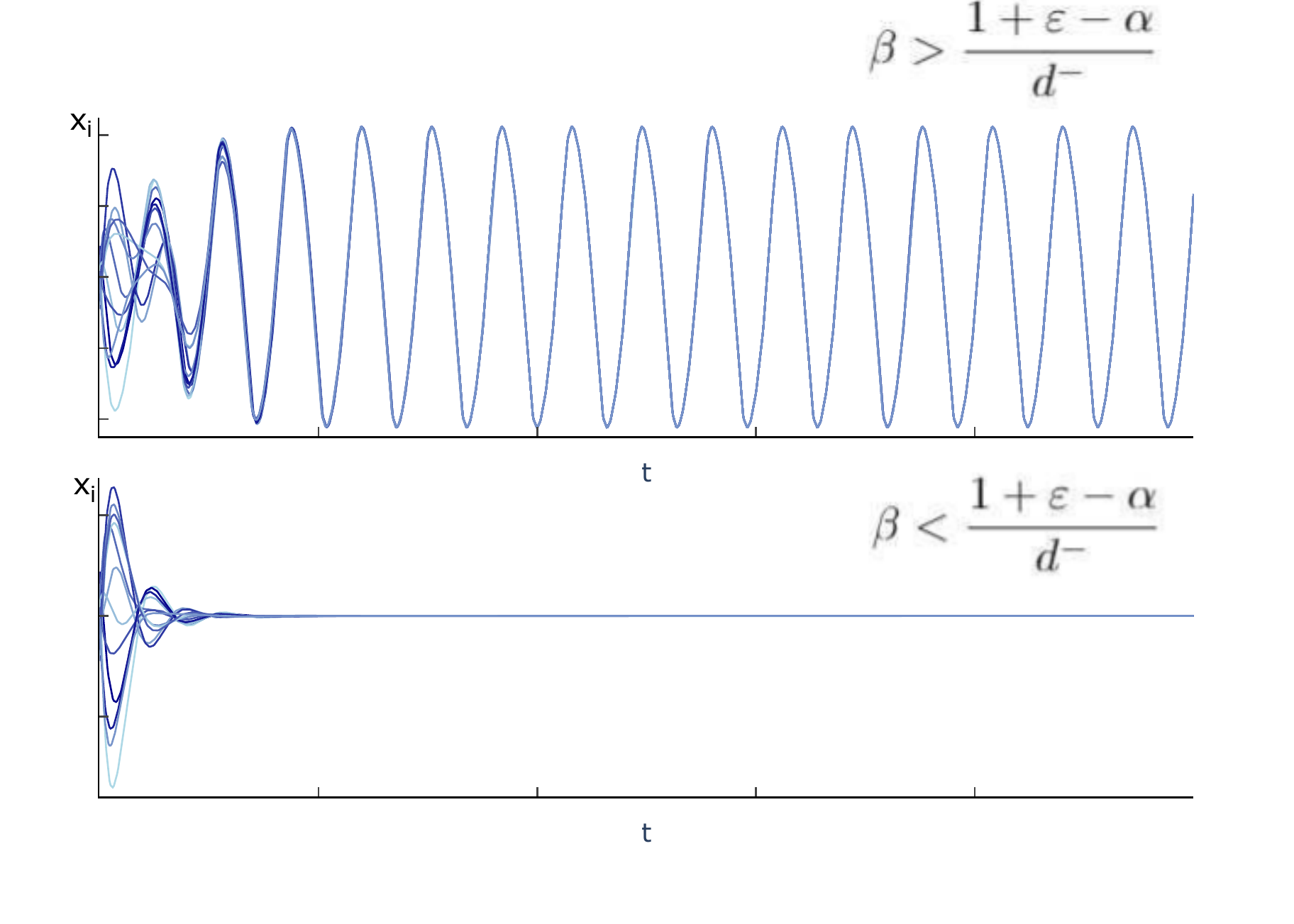}
    \caption{Contrast of two different configurations for model (\ref{eq:gral}) under in-regularity and homogeneity conditions as described in Theorem \ref{theo:inreg}. Matrix $A$ is a weighted modification of an adjacency matrix corresponding to the Frucht graph \cite{frucht39}, $N=12$. Specific parameters are $\alpha=0.5$, $\varepsilon=0.01$ and $d^-=5.0$. Image generated using \texttt{Julia 1.5.2}.}
    \label{fig:excifru1}
\end{figure}

Theorem \ref{theo:inreg} shows that when
$\beta-\frac{1+\varepsilon-\alpha}{d^-}>0$ the model exhibits synchronous oscillations. This condition can be fulfilled both by increasing the cellular positive feedback $\alpha$ for fixed network positive feedback $\beta$ or vice-versa. Figure \ref{fig:excifru1} numerically illustrates the predictions of our theorem.

%It is required $\alpha$ be less than $1+\varepsilon>0$ to maintain positive feedback, otherwise parameter $\beta$ would become negative. As Fig. \ref{fig:excifru1} shows, under weighted in-regularity conditions, oscillations are seen to be identical for arbitrary initial conditions close to $(\vo{0},\vo{0})$ when the bifurcation parameter $\beta>0$ is close to its critical value. This is supported by Theorem \ref{theo:inreg}, as the entries of the $\vo{x}$-variables in subspace $E$ must be all the same, therefore concluding that $\vo{x}$-oscillations are phase-synchronized and identical in amplitude, and $\vo{y}$-oscillations are identical among themselves and phase-synchronized to those in $\vo{x}$ (not shown).

\subsection{Strongly-connected homogeneous network}

In-regular networks are too restrictive to accurately model biological networks like the SCN. In this section we relax the in-regularity assumption. Irreducibility of the adjacency matrix for strongly connected coupling topologies implies uniqueness of a Perron-Frobenius eigenvector that fully determines the pattern of in-phase oscillations emerging at the network Hopf bifurcation.
%In this instance, identical oscillations won't be guaranteed as total input will vary from node to node. Instead, the existence (and uniqueness up to positive multiples) of a positive Perron eigenvector will indicate the relative amplitude with which each node oscillates.

\begin{theo}\label{theo:str}
Let $A\in\mathscr{M}_{N\times N}$ be an irreducible, simple, non-negative matrix associated to a strongly connected digraph, and consider model (\ref{eq:gral}) associated to $A^d=O_N$, $A^e=\beta A$ and $\alpha_i=\alpha$, $\alpha\in[0,1)$ and $\beta>0$. Let $\rho>0$ be the leading eigenvalue of $A$, and $\vo{x}_0>0$ the Perron eigenvector associated to $\rho$. Then, for $\varepsilon>0$ and sufficiently small, the system  undergoes a Hopf bifurcation along the parametric curve $\beta=\tfrac{1+\varepsilon-\alpha}{\rho}$. Moreover, the center manifold associated to the bifurcation is tangent to the real subspace 
$$E=\{(r\vo{x}_0,(r\varepsilon+s\sqrt{\varepsilon(1-\varepsilon)})\vo{x}_0)\in\mathds{R}^{2N}:\,(r,s)\in\mathds{R}^2\},$$
and is locally exponentially stable.
\end{theo}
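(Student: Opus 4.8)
The plan is to mirror the structure of the proof of Theorem~\ref{theo:inreg}, replacing the in-regular eigenpair $(d^-,\vo{1}_N)$ by the Perron eigenpair $(\rho,\vo{x}_0)$ supplied by the Perron--Frobenius theorem. First I would invoke irreducibility and non-negativity of $A$ to guarantee, via Perron--Frobenius, that the spectral radius $\rho>0$ is an algebraically (hence geometrically) simple eigenvalue whose eigenvector $\vo{x}_0$ can be taken strictly positive, and that every other eigenvalue $\rho_k$ of $A$ satisfies $\mathrm{Re}(\rho_k)<\rho$. By Lemma~\ref{lem:feed} the eigenvalues of $\beta A$ are $\mu_k=\beta\rho_k$, so the dominant one is $\mu_1=\beta\rho$. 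Substituting $\mu_1=\beta\rho$ into formula~(\ref{eq:val}) and evaluating on the curve $\beta=\tfrac{1+\varepsilon-\alpha}{\rho}$, so that $\mu_1=1+\varepsilon-\alpha$, makes the radicand collapse to $-4\varepsilon(1-\varepsilon)$ and yields the purely imaginary pair $\lambda_1^{\pm}=\pm i\sqrt{\varepsilon(1-\varepsilon)}$.

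Second, I would show that on this curve all remaining $2N-2$ eigenvalues have strictly negative real part for small $\varepsilon$. For $k\neq1$ the quadratic $\lambda^2+(1+\varepsilon-\alpha-\mu_k)\lambda+(2-\alpha-\mu_k)\varepsilon=0$ from Lemma~\ref{lem:feed} degenerates, at $\varepsilon=0$, into $\lambda^2+(1-\alpha-\mu_k^0)\lambda=0$ with $\mu_k^0=\tfrac{(1-\alpha)\rho_k}{\rho}$, so one branch tends to $\lambda_k^+\to\mu_k^0-(1-\alpha)$, which has negative real part because $\mathrm{Re}(\mu_k^0)<1-\alpha$, while the other branch tends to $\lambda_k^-\to0$. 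By continuity the first branch stays in the open left half-plane for small $\varepsilon>0$. To control the second branch I would expand it to first order using the product of roots $\lambda_k^+\lambda_k^-=(2-\alpha-\mu_k)\varepsilon$; writing $\mu_k^0=p+iq$ with $p<1-\alpha$ one finds $\mathrm{Re}(\lambda_k^-)$ proportional to $(2-\alpha-p)(p-1+\alpha)-q^2$, whose first product pairs a positive factor with a negative one, so $\mathrm{Re}(\lambda_k^-)<0$ for $\varepsilon>0$ small. This is essentially the implicit-differentiation argument already used in Theorem~\ref{thm:diffusive}, and I expect it to be the main technical obstacle, since here the $\rho_k$ are arbitrary complex numbers with $\mathrm{Re}(\rho_k)<\rho$ rather than a single real in-degree.

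Third, transversality is immediate: on the curve $\mathrm{Re}(\lambda_1^{\pm})=\tfrac{\beta\rho+\alpha-1-\varepsilon}{2}$, so $\tfrac{\partial}{\partial\beta}\mathrm{Re}(\lambda_1^{\pm})=\tfrac{\rho}{2}\neq0$, and the radicand stays negative nearby, so $\lambda_1^{\pm}$ remain a genuine conjugate pair crossing the imaginary axis transversally; simplicity of $\rho$ guarantees the crossing pair is itself simple. Invoking \cite[Theorem 3.5.2]{guckenheimer83} as in Subsection~\ref{subsec:two} then yields the Hopf bifurcation.

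Finally, for the center manifold I would read off the critical eigenvector from condition~(\ref{eq:vect}): at $\lambda=i\sqrt{\varepsilon(1-\varepsilon)}$ the relation $\beta A\vo{x}=(1-\alpha+\lambda+\tfrac{\varepsilon}{\varepsilon+\lambda})\vo{x}$ forces $\vo{x}$ into the eigenspace of $\beta A$ for $\mu_1$, which by simplicity is spanned by $\vo{x}_0$; and $\tfrac{\varepsilon}{\varepsilon+\lambda}=\varepsilon-i\sqrt{\varepsilon(1-\varepsilon)}$ gives the complex eigenvector $\vo{z}_1=(\vo{x}_0,(\varepsilon-i\sqrt{\varepsilon(1-\varepsilon)})\vo{x}_0)$. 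Separating real and imaginary parts yields $\vo{u}=(\vo{x}_0,\varepsilon\vo{x}_0)$ and $\vo{v}=(\vo{0}_N,\sqrt{\varepsilon(1-\varepsilon)}\vo{x}_0)$, and exactly as in Theorem~\ref{theo:inreg} one checks $J^e\vo{u}=\sqrt{\varepsilon(1-\varepsilon)}\,\vo{v}$ and $J^e\vo{v}=-\sqrt{\varepsilon(1-\varepsilon)}\,\vo{u}$, so the center eigenspace is $\mathrm{span}\{\vo{u},\vo{v}\}=E$. Since every non-critical eigenvalue has negative real part, the center manifold is tangent to $E$ and locally exponentially stable.
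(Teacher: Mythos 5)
Your proposal is correct and follows essentially the same route as the paper's own proof: Perron--Frobenius to obtain the simple dominant eigenpair $(\rho,\vo{x}_0)$, Lemma~\ref{lem:feed} to map $\mu_1=\beta\rho$ onto the critical pair $\pm i\sqrt{\varepsilon(1-\varepsilon)}$ at $\beta=\tfrac{1+\varepsilon-\alpha}{\rho}$, and the eigenvector conditions~(\ref{eq:vect}) to identify the tangent subspace $E$. If anything, your write-up is more complete than the paper's, since your $\varepsilon$-expansion showing the slow branches $\lambda_k^-$ remain in the open left half-plane and your explicit transversality computation $\tfrac{\partial}{\partial\beta}\mathrm{Re}(\lambda_1^{\pm})=\tfrac{\rho}{2}\neq 0$ supply details that the paper only asserts.
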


\begin{proof}
Given $A^e = \beta A$, where $A$ is irreducible and non-negative, we conclude that $\mu_1=\beta\rho>0$ is its leading real eigenvalue with corresponding eigenvector $\vo{x}_0>0$. Applying formula (\ref{eq:val}) to this eigenvalue yields two $J^e$-eigenvalues $\lambda^{\pm}_{1}$, where
$$\lambda^{\pm}_1=\dfrac{\beta\rho\!+\!\alpha\!-\!1\!-\!\varepsilon\!\pm\!\sqrt{(\beta\rho\!+\!\alpha\!-\!1\!-\!\varepsilon)^2\!-\!4\varepsilon(2\!-\!\alpha\!-\!\beta\rho)}}{2}.$$
Thus, for $\beta=\tfrac{1+\varepsilon-\alpha}{\rho}>0$, $\lambda^{\pm}_{1}$ are purely imaginary complex eigenvalues while all other eigenvalues have negative real part. Indeed, irreducibility of matrix $A$ makes it possible to apply the Perron-Frobenius Theorem which guarantees that leading eigenvalue $\rho>0$ has algebraic and geometric multiplicity one. Transversality is once again easily verified, which yields the Hopf bifurcation. Setting $\lambda = \sqrt{\varepsilon(1-\varepsilon)}$, conditions (\ref{eq:vect}) give us the $J^e$-eigenvector $\vo{z}_{1,2}=(\vo{x}_0,(\varepsilon\mp i\sqrt{\varepsilon(1-\varepsilon)})\vo{x}_0)$ associated to $\lambda_{1}^{\pm}=\pm i\lambda$ at bifurcation. As in the previous Theorem, writing $\vo{z}_0=\vo{u}_0\mp i\vo{v}_0$ in its real and imaginary parts, one again sees that $\vo{u}_0$ and $\vo{v}_0$ span the real tangent subspace to the center manifold, whence we conclude the form of subspace $E$. Because all other eigenvalues have negative real part at the bifurcation, the associated center manifold is locally exponentially stable. %\hfill $\square$
\end{proof}
\begin{figure}
    \centering
    \includegraphics[width=0.47\textwidth]{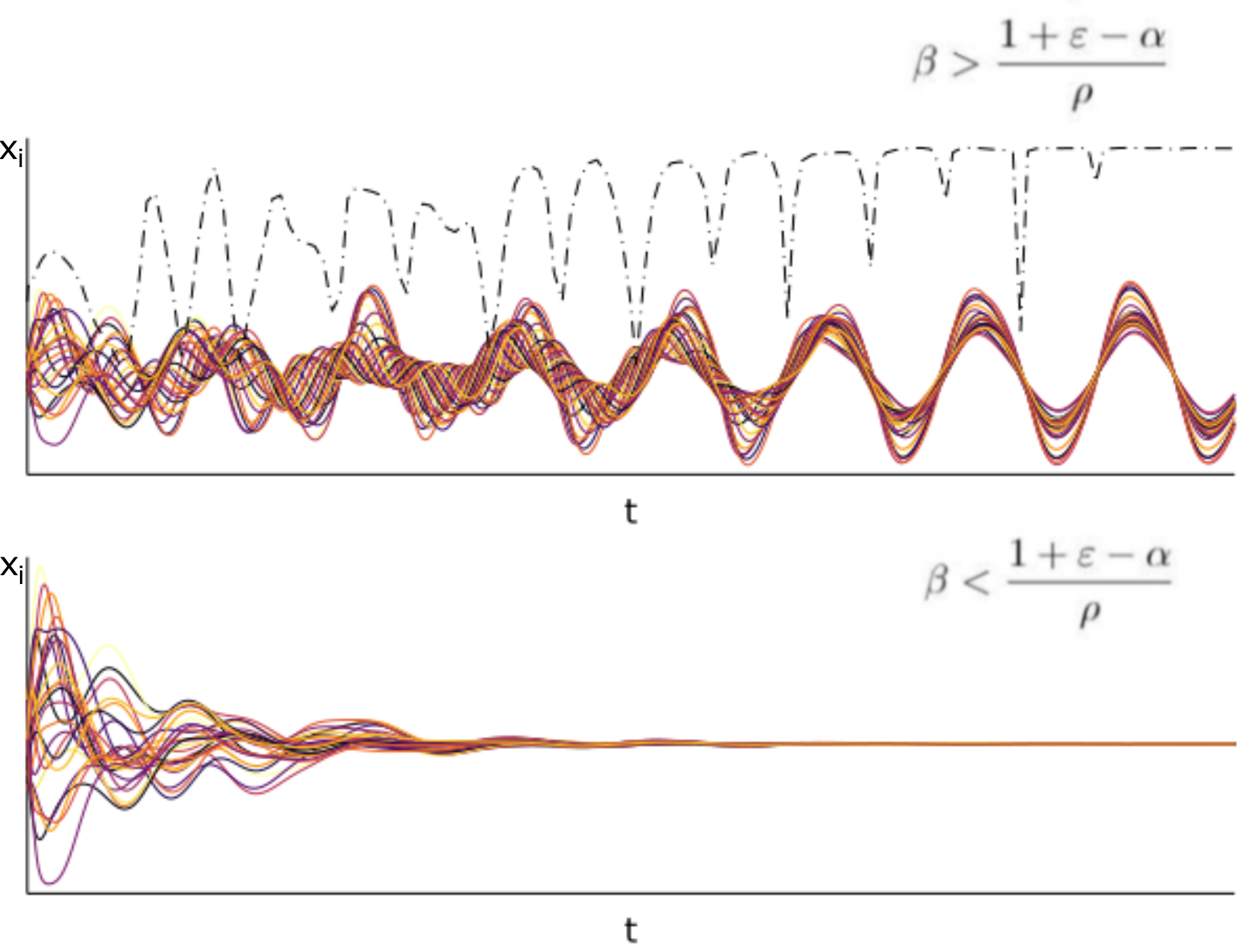}
    \caption{Contrast of two different configurations for model (\ref{eq:gral}) under homogeneity conditions as described in Theorem \ref{theo:str}. Matrix $A$ corresponds to a weighted directed cycle, $N=25$, with random positive weights $(d_1,\ldots,d_N)$. The leading eigenvalue $\rho>0$ is the positive solution of $r^N=\prod d_i$. Specific parameters are $\alpha=0.5$ and $\varepsilon=0.01$. {In the upper plot, the dashed black line shows the evolution of $l(t)=\frac{\abs{\langle\vo{x}_0,\vo{x}(t)\rangle}}{\|\vo{x}_0\|\|\vo{x}(t)\|}$, where $\langle\cdot,\cdot\rangle$ and $\|\cdot\|$ denote the standard scalar product and 2-norm, respectively, and $\vo{x}_0$ is the adjacency matrix Perron-Frobenious eigenvector as defined in Theorem~\ref{theo:str}. Observe that $l(t)\to 1$ at almost all time points, \emph{i.e.}, excluding time points where $\vo{x}(t)=0$}. In other words, along the in-phase network oscillations, the $\vo{x}$-component of the state vector is parallel to the Perron eigenvector $\vo{x}_0$, as predicted by Theorem \ref{theo:str}.}
    %indicates the inner product between Perron eigenvector $\vo{x}_0$ and $x-$state vector $\vo{x}(t)$, both normalized, as a function of time, thus showing increasing parallelism as the system evolves.}  Image generated using \texttt{Julia 1.5.2}.
    \label{fig:excistr1}
\end{figure}

Theorem \ref{theo:str} shows that when $\beta = \tfrac{1+\varepsilon-\alpha}{\rho}>0$ the model exhibits in-phase oscillations. This condition can be fulfilled both by increasing the cellular positive feedback $\alpha$ for fixed network positive feedback $\beta$ or vice-versa. Furthermore, it shows that the Perron-Frobenius eigenvector of the adjacency matrix fully controls the oscillation pattern, at least close to the Hopf bifurcation. Figure \ref{fig:excistr1} numerically illustrates the predictions of our theorem.

\section{Discussion and future directions}

\subsection{Model extension}

As it is the case for many physiological networks, sometimes outputs from one node affect the receiving node along multiple timescales. To incorporate such an effect in our model one could consider an \emph{extended case} of the excitatory version, adjusting equations (\ref{eq:gral}) to account for the effect of $\vo{x}$-variables over $\vo{y}$-variables. Under homogeneity conditions, computations of this extended case would be very similar to those made before, as seen by applying Lemma \ref{lem:block} when $d\not=0.$

\subsection{Extension to global results}

The analysis made in this paper relies solely on local properties of dynamical systems near equilibrium. Therefore, a more complete and formal approach should also incorporate global tools before and after bifurcation to guarantee convergence to either a stable steady state or a stable limit cycle. For example, in the diffusive case one could propose a Lyapunov function \cite{zhou11} or use the theory of convergent systems \cite{pavlov05}. Alternatively one can use dominance analysis~\cite{forni2018differential}, through which it might be possible to show the existence of a globally attractive and invariant 2-dimensional manifold corresponding to the center manifold of the Hopf bifurcation, which would effectively make our local bifurcation analysis global.
%to see that every trajectory converges to equilibrium before bifurcation. A similar technique, coupled with Convergent Systems Theory \cite{pavlov05}, should verify convergence to steady manifolds before and after bifurcation in the purely excitatory case. In particular, it is of interest to guarantee convergence to the center manifold after bifurcation, thus resulting in a stable synchronization manifold. Recent

\subsection{Heterogeneous populations}

It is evident that real life networks won't maintain, in general, the homogeneous hypothesis which were used extensively in the proofs of the excitatory case. A more delicate analysis should be provided when considering heterogeneous networks, much more likely to be found in real life phenomena, through higher dimensional bifurcation theory \cite{guckenheimer83}.

\subsection{Application to circadian rhythmogenesis}

The model in this work was originally motivated by the synchronization phenomena observed in the suprachiasmatic nucleus (SCN) of the mammal hypothalamus. One may identify different subpopulations and connections inside the SCN (e.g. spatially \cite{welsh10}, GABAergic \cite{dewoskin15}, neuropeptidergic \cite{evans16, shan20}). Neuromodulation here not only affects the electrophysiological rhythms, but also gives input to the molecular clock through a much slower loop \cite{diekman13}. Other works have pointed out that neural appositions and density of connections vary according from one neuropeptidergic subpopulation to the other \cite{mieda19, varadarajan18}. Therefore, a \emph{multilayer digraph} model may result useful in capturing the dynamic properties of this circadian phenomenon. Additional topologies $A_j^e$ could be incorporated to account for different neuropeptide release (VIP, AVP, GRP being the main ones).

\bibliographystyle{ieeetr}
\bibliography{synchro_ref.bib}

\addtolength{\textheight}{-12cm}   % This command serves to balance the column lengths
                                  % on the last page of the document manually. It shortens
                                  % the textheight of the last page by a suitable amount.
                                  % This command does not take effect until the next page
                                  % so it should come on the page before the last. Make
                                  % sure that you do not shorten the textheight too much.

\end{document}